\documentclass[sigconf,10pt]{acmart}
\usepackage{fancyhdr}
\fancyhead{} 
 \usepackage{amsmath}
\usepackage{cases}
\usepackage{stfloats}
\usepackage{enumerate}
\usepackage{url}
\usepackage{graphicx}
\usepackage{balance}
\usepackage{algorithm}
\usepackage{algpseudocode}

 \usepackage{etoolbox}

\newtoggle{ACM}
\toggletrue{ACM}

\newtoggle{IEEEcls}
\toggletrue{IEEEcls}
\togglefalse{IEEEcls}

\newtoggle{FullVersion}
\toggletrue{FullVersion}
\togglefalse{FullVersion}


\floatname{algorithm}{Algorithm}

\iftoggle{IEEEcls}{
\usepackage{amsthm}
\theoremstyle{plain}
\newtheorem{theorem}{Theorem}

\theoremstyle{definition}
\newtheorem{definition}{Definition}
 
\newtheorem{lemma}{Lemma}

}

\newif\ifNotUse  
 \NotUsetrue

 \iftoggle{ACM}{
\setcopyright{none}  
\acmDOI{}
\acmPrice{}
\acmISBN{}
\acmConference[ ]{ }{ }{ }
\acmYear{2018}

}

\begin{document}
\title{Memory efficient distributed sliding super point cardinality estimation by GPU}
\author{Jie Xu}
\affiliation{%
  \institution{School of computer science and engineer, Southeast university}
  \city{Nanjing}
  \state{China}
 }
\email{xujieip@163.com}

\iftoggle{IEEEcls}{
\author{\IEEEauthorblockN{Jie Xu
                          
                                  }
\IEEEauthorblockA{School of Computer Science and Engineering\\ South East University \\
Nanjing, China\\
Email: xujieip@163.com}

}

}

\begin{abstract}
Super point is a kind of special host in the network which contacts with huge of other hosts. Estimating its cardinality, the number of other hosts contacting with it, plays important roles in network management. But all of existing works focus on discrete time window super point cardinality estimation which has great latency and ignores many measuring periods. Sliding time window measures super point cardinality in a finer granularity than that of discrete time window but also more complex. This paper firstly introduces an algorithm to estimate super point cardinality under sliding time window from distributed edge routers. This algorithm's ability of sliding super point cardinality estimating comes from a novel method proposed in this paper which can record the time that a host appears. Based on this method, two sliding cardinality estimators, sliding rough estimator and sliding linear estimator, are devised for super points detection and their cardinalities estimation separately. When using these two estimators together, the algorithm consumes the smallest memory with the highest accuracy. This sliding super point cardinality algorithm can be deployed in distributed environment and acquire the global super points' cardinality by merging estimators of distributed nodes. Both of these estimators could process packets parallel which makes it becom possible to deal with high speed network in real time by GPU. Experiments on a real world traffic show that this algorithm have the highest accuracy and the smallest memory comparing with others when running under discrete time window. Under sliding time window, this algorithm also has the same performance as under discrete time window. 
\end{abstract} 
 
 
  \iftoggle{IEEEcls}{
\begin{IEEEkeywords}
Sliding super point, network measurement, sliding time window, GPU computing, distributed computing.

\end{IEEEkeywords}
}

\keywords{Super point detection, cardinality estimation, GPU computing, distributed computing, network measurement}
\maketitle
 
\section{Introduction}
In nowadays network, there are huge packets passing through Internet every second \cite{cisco:NetForcast}. It is too expensive to measure every host in the network. An efficient way is to focus on special ones which have great influence on the network security and management. The super point, a host which communicates with lots of others, is one of such special hosts
, such as Web servers, P2P spreaders, DDoS victims, scanners and so on. For a host, the number of other hosts communicating with it is called as its cardinality. Detecting super point and estimating its cardinality can help us with network management and security\cite{scan:surveyPortScansAndDetection}\cite{Scan:NetworkScanDetectionWithLQS}\cite{rtID:RealTimeIntrusionDetectionandPreventionSystem}. It is also a foundation module of many instruction detection system\cite{Secure:SnortLightweightIntrusionDetectionNetworks}.

For example, DDoS (Distributed Denial of Service) attack is a heavy threat to the Internet\cite{DosC:AnalysisSimulationDDOSAttackCloud}\cite{DDos:ATaxonomyOfDDoSAttack}. It appears at the beginning of the Internet and becomes complex with the rapid growth of the network technology.  Although many defense algorithms have been proposed, most of them are too elaborate to deploy on the high-speed network. The peculiarity of a victim under DDoS attack is that it will receive huge packets with different source IP addresses in a short period\cite{DDos:surveyDefenseFloodingAttacks}\cite{DDos:AttackingDDoSAtSource}. A DDoS victim is a typical super point\cite{DDos:surveyCoordinatedAttacksIntrusionDetection}  .  

Super point only accounts for a small fraction of the overall hosts. If we detect super points first and spend more monitoring resource to them, we can defense DDoS much more efficiently.  Real-time super points detection on core network is an important step of these applications. 

Many researchers try to use small and fast memory, such as static random accessing memory SRAM, to detect super point. These algorithms used estimating method to record hosts' cardinalities and restore super points at the end of a time period. But the accuracy of these algorithms will decrease with the reduction of memory. And their memory requirement grows rapidly with the number of the packets in a certain period.

At the same computing platform, estimation algorithms are faster than precise algorithm because hash table needs an additional operation to deal with collision problem. Most of the previous algorithms tried to accelerate the packets processing speed by using fast memory SRAM. But the small size SRAM limits the accuracy of these algorithms in a high-speed network. What's more, estimation algorithm requires lots of computation operations and the computation ability of CPU is also the bottleneck. Parallel computation ability of GPU (Graphic Processing Unit) is stronger than that of CPU because of its plenty operating cores. When using GPU to scan packets parallel, we would get high throughput. 

Super point detection has been researched for a long time because of its importance. And many excellent algorithms have been proposed recent years. But these algorithms only work for discrete time window, under which there is no duplicating time period between two adjacent windows. These algorithms will reinitialize at the beginning of every window and discard hosts' cardinality information in previous time. This time window splits host cardinality into discrete piece and can only report super points at the end of a window which has a latency of the size of time window. Sliding time window which moves a small unit smoothly has a better measurement result than discrete time window. It stores and updates host cardinality information incrementally. Sliding time window estimates super point cardinality more precisely because it is not affected by the starting of window. And sliding time window reports super point more timely for the sake that the moving unit time is much smaller than the size of discrete time window and at the end of each moving unit time, super point will be detected and reported once. But super point detection and cardinality estimation under sliding time window is more complex than that under discrete time window because it maintains hosts state for some previous time.

The speed of nowadays network is growing rapidly. For a core network, it always contains several border routers which locate at different places. How to detect overall super points and estimate their cardinalities from all of these distributed routers under sliding time window is more difficult than from a small single router.  

To overcome previous algorithms weakness, we devise a novel distributed super points algorithm which can detect super points and estimate their cardinalities under slidng time window. This algorithm also consumes the smallest memory. When running on a low-cost GPU, our algorithm can deal with core network traffic in real time. The contributions of this paper are list following:
\begin{enumerate}
\item A distributed sliding super point detection algorithm is proposed.

\item A memory efficient distributed sliding super point cardinality estimation algorithm is devised.

\item Deploy the sliding super point detection and cardinality estimation algorithm on a common GPU to deal with core network in real time.

\end{enumerate}

In the next section, we will introduce previous super point detection algorithm and analyze their merit and weakness. In section 3, a novel super point detection algorithm under sliding time window will be introduced together with how to run on GPU. Section 4 describes how to estimate host's cardinality under sliding time window and how to deploy it on GPU. Section 5 shows experiments of real world core network traffic. And we make a conclusion in the last section.

\section{Related Work}
High speed network super point detection has been researched for a long time. At first, sampling method was used to solve the problem of slow processing speed\cite{HSD:streamingAlgorithmFastDetectionSuperspreaders}\cite{HSD:identifyHighCardinalityHosts}. But sampling method affected the accuracy of these algorithm especially in the situation where a high sampling rate was adopted. Then many works tried to improve the processing speed by using high speed memory, such as CBF\cite{HSD:LineSpeedAccurateSuperspreaderIdentificationDynamicErrorCompensation}, DCDS\cite{HSD:ADataStreamingMethodMonitorHostConnectionDegreeHighSpeed}  , VBFA\cite{HSD:DetectionSuperpointsVectorBloomFilter}.

Chen et al. \cite{HSD:LineSpeedAccurateSuperspreaderIdentificationDynamicErrorCompensation} proposed a contacting hosts estimator called counter bloom filter CBF based on the theorem of bloom filter. When a flow appears, several counters in CBF were added by one. A flow only updated CBF once. This algorithm had a high accuracy and speed when running with a single thread on SRAM. According to the statement of the authors, this algorithm could scan 2 million packets per second. But this speed was still too low for nowadays high speed network which forwards more than 6 million packets every second. And this algorithm couldn't work on parallel and distributed environment because a flow may update CBF many times in these cases. 

Wang et al.\cite{HSD:ADataStreamingMethodMonitorHostConnectionDegreeHighSpeed} used linear estimator \cite{DC:aLinearTimeProbabilisticCountingDatabaseApp} to estimate hosts' cardinalities and proposed a novel structure called DCDS based on Chinese Remainder Theory(CRT) which can restore hosts directly. But CRT is so complex that it requires many computing resource and time. To overcome this weakness, Liu et al.\cite{HSD:DetectionSuperpointsVectorBloomFilter} proposed a structure called VBF which was similar to a bloom filter. VBF regained hosts by bits comparing and concatenation, instead of by CRT. VBF had a much faster speed than DCDS because of its simple regain procedure. VBF used sub bits of IP address to map a host to several linear estimators. Sub bits can be acquired quickly but had little randomness which caused that most of linear estimators in VBF were not be used and memory was wasted.

Those algorithms only focused on how to speed up by reducing memory latency. They neglected the huge computing resource requirement. GPU can solve this two problems, high memory operation speed and plenty computing resource, all together.  

GPU is the best desktop super computing platform which has the same computing ability as a small cluster. In a single GPU chip, hundreds or thousands of cores sharing a big global graphic memory. Different threads can read and store this memory parallel. Although a core in GPU is a little slower, lower frequency, than a core in CPU, the total computing ability of these hundreds of GPU cores is much stronger than that of a CPU which only have teens of cores at most. The convenient program environment, such as CUDA\cite{GPU:OptimizationPrinciplesApplicationPerformanceCUDA}, OpenCL\cite{OpenCL:AParallelProgrammingStandardForHeterogeneousComputingSystems}, makes GPU becomes one of the most popular parallel computing platform. 

GPU was firstly used to detect super points by Seon-Ho et al.\cite{HSD:GPU:2014:AGrandSpreadEstimatorUsingGPU}. They deployed a novel structure called virtual vector on GPU to estimate hosts' cardinalities. But virtual vector could only estimate contacting hosts number under discrete time window and super points could not be reconstructed from it directly.

None of these algorithms can estimate super points cardinalities under discrete time window. This paper will introduce a sliding super points detection algorithm and describe how to deploy it on GPU for real time distributed running.

\section{Sliding super point detection}
\subsection{Sliding super point}
Suppose there are two networks $ANet$ and $BNet$. These two networks are contacting with each other through a set of edge routers $ER$. $ANet$ might be a city-wide network or even a country-wide network. And $BNet$ might be another city-wide network or the Internet. All traffic between $ANet$ and $BNet$ could be observed from $ER$. Split this traffic by successive time slices as shown in figure \ref{SlidingWindowModal}.

\begin{figure}[!ht]
\centering
\includegraphics[width=0.47\textwidth]{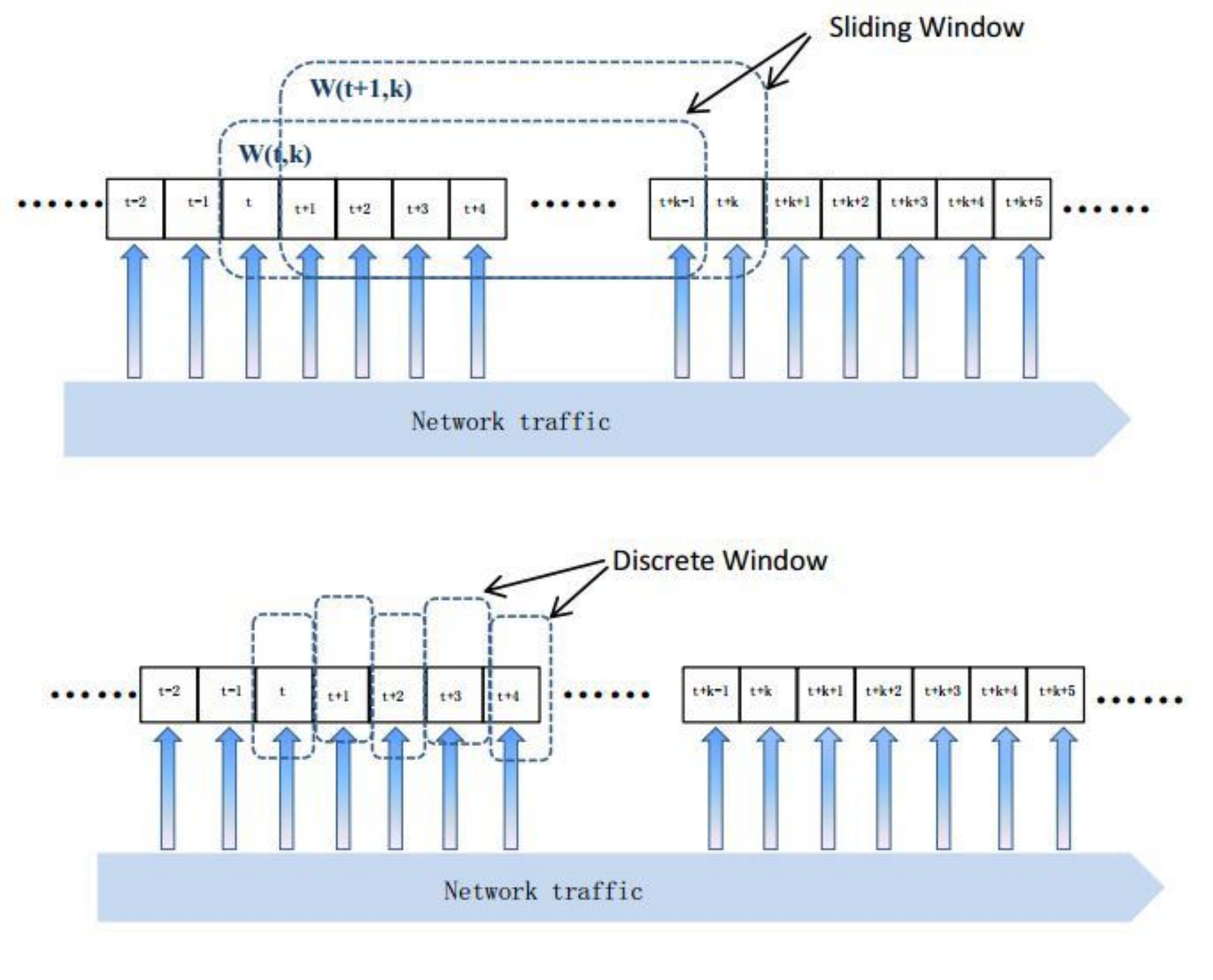}
\caption{Sliding time window and discrete time window}
\label{SlidingWindowModal}
\end{figure}

These time slices have the same duration. The length of a time slice could be 1 second, 1 minute or any period in different situations. Every time slice is identified by a number. A sliding time window $W(t,k)$ contains $k$ successive slices starting from the $t$ time slice as shown in the top part of figure \ref{SlidingWindowModal}. Sliding time window will move forward one slice once a time. So two adjacent sliding time windows contain $k-1$ same slices. When $k$ is set to 1, there is no duplicate time period between two adjacent windows, which is the case of discrete time window in the bottom part of figure \ref{SlidingWindowModal}. 

Let $ANet$ be the network from which we want to detect super points. A host's packets stream in a sliding time window is defined as below.

\begin{definition}[Packets stream of a host]
\label{def-slidingPktStream}
For a host $aip \in ANet$, every packet passing through $ER$ in sliding time window $W(t,k)$ which has $aip$ as source or destination address composes packets stream of $aip$, written as $Pkt(aip,t,k)$.
\end{definition}

$aip$'s opposite hosts stream $ST(aip,t,k)$ could be derived from $Pkt(aip,t,k)$ by extracting the other IP address except $aip$. A IP address $bip$ may appear several times in $ST(aip,t,k)$ because $aip$ can send several packets to $bip$ or receive many packets from $bip$. Hosts in $ST(aip,t,k)$ make up of opposite hosts set of $aip$, written as $OH(aip,t,k)$. The number of element in $OH(aip,t,k)$, denoted as $|OH(aip,t,k)$, is no bigger than that of $ST(aip,t,k)$. $|OH(aip,t,k)|$ is the cardinality of $aip$ in sliding time window $W(t,k)$. Sliding super point is defined according to host's cardinality.

\begin{definition}[Sliding super point]
\label{def-slidingSuperPoint}
For a host $aip \in ANet$, if $|OH(aip,t,k)| \geq \theta$, $aip$ is a sliding super point in sliding time window $W(t,k)$. Where $\theta$ is a positive integer.
\end{definition}

When $k=1$, sliding super point could be called as super point. Threshold $\theta$ is defined by users for different applications. It could be selected according to the average cardinality of all host in the past or the normal cardinality of a server. How to get $|OH(aip,t,k)|$ from $ST(aip,t,k)$ is a hard task. Because packets pass through $ER$ with high speed and every packet could only be scanned a time in the stream. How to process every coming packet and give an accurate estimation of $|OH(aip,t,k)|$ at the end of the last time slice of $W(t,k)$ is the key step in the whole algorithm.
\subsection{Detecting sliding super point with rough estimator}
The key step of sliding super point detecting is to determine if a host is a super point in a sliding time window. Because discrete time window is a special case of sliding time window and working under discrete time window is much simpler than that under sliding time window, we firstly introduce how to judge super point under discrete time window and then give its sliding time window version. 

For a host $aip$, the task of judging super point under discrete time window is to determine if $|OH(aip,t,1)| \geq \theta$ by scanning every host in $ST(aip,t,1)$ once. Rough estimator $RE$ proposed in this paper is a memory efficient algorithm which can tell if a host is a super point in a time period with only 8 bits. These 8 bits are initialized to zero at the begin of a time period. $RE$ samples and records hosts in $ST(aip,t,1)$ by the least significant bits of their hashed values. Least significant bit of an integer is defined in the below.

\begin{definition}[Least significant bit, LSB]
\label{def-LeastSignifcantBit}
Given an integer $i$, let $BIN(i)$ represent its binary formatter. The least significant bit of $i$, $LSB(i)$, is the index of the first `1' bit of $BIN(i)$ starting from right.
\end{definition}

For example, $LSB(3)=0$, $LSB(40)=3$. The binary formatters of 3 and 40 are "11" and "101000". The first bit of $BIN(3)$ is `1', so $LSB(3)$ equals to 0. While $BIN(40)$ meets its first `1' until the fourth bit, so its $LSB$ is 3. For every host $bip$ in $OH(aip,t,1)$, RE hashes it to a random value between 0 and $2^{32} -1$ by a hash function $H_1$ \cite{hash_AsmallApproximatelyMinWiseIndependentHF}. If $LSB(H_1(bip))$ is smaller than an integer $\tau$, this IP will not be recorded by $RE$ where $\tau$ is derived from $\theta$ by equation \ref{eq_getLsbThreshold}.

\begin{equation}\label{eq_getLsbThreshold}
\tau=ceil(log_2(\theta/ \eta ))
\end{equation} 

When  $LSB(H_1(bip)) \geq \tau$, a bit selected by $H_2(bip)$ will be set where $H_2$ is another hash function mapping $bip$ to a value between 0 and $\eta -1$.

$RE$ deals with every host in $ST(aip,t,1)$ in this way. At the end of slice $t$, if the number of `1' bits is no smaller than $\rho * \eta$, $|OH(aip,t,1)|$ is judged as bigger than $\theta$ by $RE$, where $\rho = 0.99 * (1-e^{-1/3})$. $\rho$ is acquired from \cite{DC:AnOptimalAlgorithmDistinctElementProblem}.

$RE$ has a high probability to report a super point. 

\iftoggle{FullVersion}{
Then we will give its mathematical analyze.

\begin{lemma}
\label{la-RE_fullN}
Suppose there are $\alpha$ different balls, $\eta$ different boxes and $\alpha \geq \eta$. Throw all of these balls randomly to these boxes. Let $FN(\alpha, \eta)$ represent the number of situations that every $\eta$ boxes has at least a ball. Then $FN(\alpha, \eta)= \eta ^ \alpha -\sum_{i=1}^{r-1}C_r^i*FN(\alpha, i)$ and $FN(\alpha,1) =1$.
\end{lemma}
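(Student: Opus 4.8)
The plan is to prove the recurrence by a direct counting argument that classifies every assignment of balls to boxes according to exactly how many boxes end up occupied. First I would observe that $\eta^\alpha$ counts the total number of ways to throw $\alpha$ distinguishable balls into $\eta$ distinguishable boxes with no restriction, since each ball independently chooses one of the $\eta$ boxes. Every such assignment occupies some nonempty subset of the boxes; writing $i$ for the number of occupied boxes, we have $1 \le i \le \eta$, the lower bound holding because at least one box is hit.

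Next I would partition the $\eta^\alpha$ unrestricted assignments by the value of $i$. For a fixed $i$, I would count the assignments occupying \emph{exactly} $i$ boxes in two stages: choose which $i$ of the $\eta$ boxes are to be the occupied ones, in $C_\eta^i$ ways, and then count the assignments of all $\alpha$ balls into those $i$ chosen boxes that leave none of the chosen boxes empty. By the very definition of $FN$, the latter count is $FN(\alpha, i)$, and crucially it depends only on $i$ and not on which particular $i$ boxes were chosen, by symmetry among the boxes. Summing over all admissible $i$ yields the identity
\[
\eta^\alpha = \sum_{i=1}^{\eta} C_\eta^i \, FN(\alpha, i).
\]
Isolating the top term $i=\eta$, for which $C_\eta^\eta = 1$, rearranges this into
\[
FN(\alpha, \eta) = \eta^\alpha - \sum_{i=1}^{\eta-1} C_\eta^i \, FN(\alpha, i),
\]
which is precisely the claimed recurrence. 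The base case $FN(\alpha,1)=1$ I would dispatch trivially: with a single box every ball has no choice but to enter it, so there is exactly one assignment, and it does occupy the box since $\alpha \ge 1$.

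The argument has no serious obstacle; the one point demanding care is the justification that the inner count equals $FN(\alpha,i)$ uniformly across all $C_\eta^i$ choices of the occupied boxes. This is where the distinguishability of balls and boxes and the symmetry of the model are used, and it is exactly what lets the factor $C_\eta^i$ separate cleanly out of the count. I would also remark that the hypothesis $\alpha \ge \eta$ is not actually required for the recurrence itself — it merely guarantees $FN(\alpha,\eta)>0$ — so the identity above in fact holds for every $\alpha \ge 1$, and I would invoke $\alpha \ge \eta$ only if needed for the downstream application.
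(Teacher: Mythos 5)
Your proof is correct and follows essentially the same route as the paper's: partition the $\eta^\alpha$ unrestricted assignments by the exact set of occupied boxes, count each class as $C_\eta^i\,FN(\alpha,i)$, and subtract the classes with $i<\eta$ from the total. Your write-up is in fact cleaner, since it implicitly corrects the paper's notational slip of writing $C_r^i$ and an upper summation limit of $r-1$ where $C_\eta^i$ and $\eta-1$ are meant, and you rightly note that the hypothesis $\alpha\ge\eta$ is not needed for the recurrence itself.
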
 
\begin{proof}
There are total $\eta ^ \alpha$ situations to throw $\alpha$ balls to $\eta$ boxes. When there is only a box, there is only a situation, throwing all balls to it. When throwing all balls to $i$ boxes and all of these boxes contain at least one ball, there are $C_r^i*FN(\alpha, i)$ situations. Deduct all situations that all balls are thrown to a subset of $\eta$ boxes from $\eta ^ \alpha$, the rest is the number of situations that there are no empty boxes.
\end{proof} 

\begin{theorem}
 \label{th-RE_NoneEmptyBoxesN}
Throw $\alpha$ balls to $\eta$ boxes. Let $\eta_1$ represent the number of boxes that contain at least a ball. The number of situations that there are $\eta_1$ balls are none empty is $FN(\alpha,\eta,\eta_1)=C_{\eta}^{\eta_1}*FN(\alpha,n)$, where $1\leq n \leq \eta$.  
\end{theorem}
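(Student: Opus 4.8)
The plan is to prove this by a direct \emph{partition-and-count} argument, grouping all arrangements according to \emph{which} boxes end up occupied. (I read the claimed count as $FN(\alpha,\eta,\eta_1)=C_\eta^{\eta_1}\cdot FN(\alpha,\eta_1)$, interpreting the symbol $n$ in the statement as $\eta_1$.)

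First I would fix the sample space: throwing $\alpha$ distinguishable balls into $\eta$ distinguishable boxes gives $\eta^\alpha$ elementary arrangements, and the goal is to count exactly those in which precisely $\eta_1$ of the boxes are non-empty. The key observation is that every such arrangement determines a unique subset $S \subseteq \{1,\dots,\eta\}$ of size $\eta_1$, namely the set of boxes that actually receive at least one ball. This gives a natural map from the event ``exactly $\eta_1$ boxes occupied'' onto the collection of $\eta_1$-element subsets of boxes.

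Next I would partition the event along the fibres of this map. For a fixed subset $S$ with $|S|=\eta_1$, the arrangements whose occupied set is exactly $S$ are precisely the ways of throwing all $\alpha$ balls into the $\eta_1$ boxes of $S$ so that none of them is empty; by the definition of $FN$ established in Lemma \ref{la-RE_fullN}, there are exactly $FN(\alpha,\eta_1)$ of these. Crucially, this count depends only on $|S|=\eta_1$ and not on which particular boxes form $S$, since relabelling the boxes is a bijection, so every fibre has the same size. Because the fibres are pairwise disjoint and there are $C_\eta^{\eta_1}$ subsets $S$ of size $\eta_1$, summing over all of them yields $FN(\alpha,\eta,\eta_1)=C_\eta^{\eta_1}\cdot FN(\alpha,\eta_1)$, as required. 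The side condition $1\le \eta_1\le \eta$ merely records the natural range of the number of occupied boxes.

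There is no serious obstacle here: the entire content is that the occupied set is well defined and that fibres of equal cardinality partition the event. The one point deserving a line of care is the disjointness-and-symmetry step, namely verifying that distinct subsets $S$ give disjoint families of arrangements (so no arrangement is double-counted) and that each family has size exactly $FN(\alpha,\eta_1)$ independent of $S$. Once that is nailed down, the identity follows immediately.
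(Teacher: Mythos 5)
Your proposal is correct and is essentially the same argument as the paper's own proof: choose the $\eta_1$ occupied boxes in $C_\eta^{\eta_1}$ ways and multiply by $FN(\alpha,\eta_1)$ surjective placements, with your reading of the stray symbol $n$ as $\eta_1$ matching the paper's intent. Your version is merely more careful about the disjointness of the fibres, which the paper leaves implicit.
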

\begin{proof}
The rest $\eta-\eta_1$ balls are empty. There are $C_{\eta}^{{\eta}_1}$ situations to choose $\eta - \eta_1$ empty balls. Each situation has $FN(\alpha,\eta_1)$ methods to throw $\alpha$ balls. So the number of total situations is $C_{\eta}^{\eta_1}*FN(\alpha,\eta_1)$.
\end{proof}

$OH(aip,t,1)$ could be regarded as the set of balls and $\eta$ bits could be regarded as boxes in theorem \ref{th-RE_NoneEmptyBoxesN}. $\eta_1$ means the number of bits that are set to 1. 
}
Suppose there are $\alpha$ hosts in $OH(aip,t,1)$ updating $RE$. The probability that there are $\eta_1$ bits are set to 1 is :

\begin{equation}\label{eqt_prb_alphaNoneEmpytBox}
 Pr\{\alpha,\eta,\eta_1\}=\frac{FN(\alpha,\eta,\eta_1)}{{\eta}^{\alpha}}
\end{equation}

Every host in $OH(aip,t,1)$ has probability $\frac{1}{2^\tau}$ to update $RE$. So the probability that there are $\alpha$ hosts in $OH(aip,t,1)$ updating $RE$ is:

\begin{equation}\label{eqt_prb_alphaNum}
\begin{split}
& Pr\{|OH(aip,t,1)|,\alpha\} \\
& =C_{|OH(aip,t,1)|}^{\alpha}*({\frac{1}{2^ \tau}})^\alpha*(1-\frac{1}{2^\tau})^{|OH(aip,t,1)|-\alpha}
\end{split}
\end{equation}

Combine equation \ref{eqt_prb_alphaNoneEmpytBox} and \ref{eqt_prb_alphaNum}, we will get the probability that there are $\eta_1$ `1' bits in $RE$ after scanning $ST(aip,t,1)$ as shown in equation \ref{eqt_prb_1BitsNAfterScanningHostStrm}.
\begin{equation}
\label{eqt_prb_1BitsNAfterScanningHostStrm}
\begin{split}
& Pr\{|OH(aip,t,1)|,\eta,\tau,\eta_1\} \\
& =\sum_{\alpha=\eta_1}^{|OH(aip,t,1)|}Pr\{|OH(aip,t,1)|,\alpha\}*Pr\{\alpha,\eta,\eta_1\}
\end{split}
\end{equation}.

The probability that there are more than ${\eta}_1$ `1' bits in $RE$ after scanning $ST(aip,t,1)$ could be derived from \ref{eqt_prb_1BitsNAfterScanningHostStrm} as shown in equation \ref{eqt_prb_1BitsNPlusAfterScanningHostStrm}.
\begin{equation}
\label{eqt_prb_1BitsNPlusAfterScanningHostStrm} 
\begin{split}
& Pr^+\{|OH(aip,t,1)|,\eta,\tau,\eta_1\} \\
& =\sum_{\eta_1=n}^{\eta}Pr\{|OH(aip,t,1)|,\eta,\tau,\eta_1\}
\end{split}
\end{equation}

Equation \ref{eqt_prb_1BitsNPlusAfterScanningHostStrm} proofs that $RE$ has a high probability to detect super point in discrete time window. There is only one time slice in $W(t,1)$. So one bit is enough to represent if a host appears in this slice. But when $k>1$, a single bit can't know if some hosts still appear in $W(t+1,k)$ when the window sliding from $W(t,k)$. 

Unlike discrete version using $\eta$ bits, $RE$ sliding version $SRE$ uses $\eta$ short integers, where each short integer is unsigned short integer with 16 bits, to record host's cardinality. $SRE$ has five operations: initialization, hosts scanning, super point detection, slice updating, $SRE$s merging.

Before $SRE$ launching, every short integer will be initialized to the biggest value 65535. Like $RE$, $SRE$ scans every host in a time slice $t+k-1$ by hashing the host, comparing the LSB of hashed value with $\tau$. When the hashed value is no smaller than $\tau$, an short integer will be selected and set to 0, unlike $RE$ which sets a bit to 1. Let $SRE[i]$ point to the $i$th short integer in $SRE$. The weight of $RE$ is the number of 1 bits in it. But the weight of $SRE$ is the number of integer whose value is smaller than $k$, denoted as $|SRE|^k$. $SRE$ will check $aip$ at the end of time slice $t+k-1$. It judges if $aip$ is super point by comparing its weight $|SRE|^k$ with $\eta * \rho$ like $RE$. After checking $aip$, the window will slide to $W(t+1, k)$. $SRE$ will not initialize its short integers at the beginning of a new time slice. But every short integer will increment 1 before scanning new hosts in this time slice. A short integer in $SRE$ records the distance between the nearest time slice when it is set and the scanning time slice. So when the window moves forward, the distance will grow too. If a short integer is mapped by some host in the time slice, the distance is 0. When merging two $SRE$s together, short integers in the new $SRE$ will be selected from the biggest ones. Algorithm \ref{alg_SRE_ScanningHost} and \ref{alg_SREmerge} describe the hosts scanning and $SRE$ merging process.

\begin{algorithm}                       
\caption{Scan hosts}          
\label{alg_SRE_ScanningHost}                            
\begin{algorithmic}                    
\Require {\\opposite host $bip$,\\
Sliding rough estimator $SRE$\\} 
\State  $hbip<-H_1(bip)$
\State $lsb<-LSB(hbip)$
\If{$lsb < \tau$}
\State Return
\EndIf
\State $sInt<-H_2(bip)$
\State $SRE[sInt]<-0$
\end{algorithmic}
\end{algorithm}

\begin{algorithm}                       
\caption{Merge sliding rough estimators}          
\label{alg_SREmerge}                            
\begin{algorithmic}                    
\Require {\\sliding rough estimator $SRE_1$, $SRE_2$} 
\Ensure{merged sliding rough estimator $SRE_3$ \\}  

\State Init $SRE_3$
\For {$sInt \in [0,\eta-1]$}
\If{$SRE_1[sInt]>SRE_2[sInt]$}
\State $SRE_3[sInt]<-SRE_1[sInt]$
\Else
\State $SRE_3[sInt]<-SRE_1[sInt]$
\EndIf
\EndFor
\State Return $SRE_3$
\end{algorithmic}
\end{algorithm}

There is no accessing conflict in $SRE$ which means $SRE$ could deal with several hosts parallel to speed up the process. $SRE$ can determine if a host is super point in a window. But there are millions of hosts in $ANet$ for a high speed network. It's too expensive and slow to allocate a $SRE$ for every host in $ANet$. A smart structure based on $SRE$ is devised to solve this problem. The new algorithm uses a fixed number of $SRE$s to record hosts and restore sliding super points at the end of a time slice as shown in the next part.

\subsection{Running rough estimator on GPU}
Many hosts in $ANet$ have a smaller cardinality. Allocating a $SRE$ for every host will waste lots of memory and slow down hosts scanning speed. Based on $SRE$, a smart structure, called as reversible sliding rough estimator array RSRA, is proposed. RSRA contains $2^q$ columns and r rows of sliding rough estimators as shown in figure \ref{fig_RSRA}. Let RSRA[i,j] point to the $SRE$ in the $i$th row, $j$th column. This structure is reversible because sliding super point could be reconstructed from it without any other data. This reversible ability comes from a novel hash functions group, reversible hash functions group RHFG. 
\begin{figure}[!ht]
\centering
\includegraphics[width=0.47\textwidth]{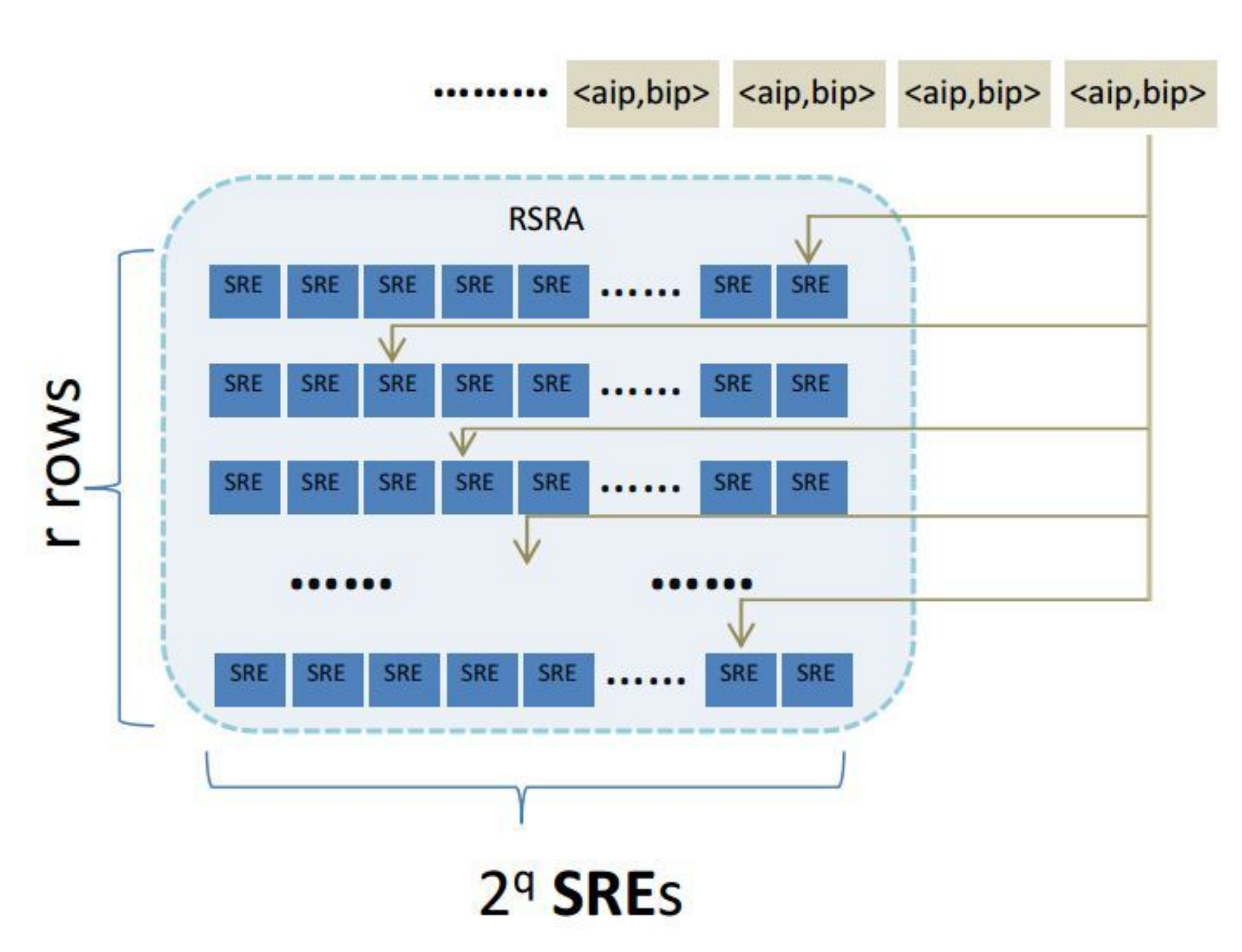}
\caption{Reversible sliding rough estimator array}
\label{fig_RSRA}
\end{figure}

RHFG is an array of r hash functions, each of which hashes an IP address to a value between 0 and $2^q-1$. Let RHFG[i] represent the $i$th hash function. RHFG[0] is a random hash function \cite{hash_ThePowerOfSimpleTabulationHashing} which maps a IP address to an integer between 0 and $2^q-1$, $0 \leq RHFG[0](aip) \leq 2^q-1$ where $aip \in ANet$. The rest $r-1$ hash functions are derived from $RHFG[0]$ according the following equation.
\begin{equation}\label{eqt_RHFG_restHashes}
\begin{split}
& RHFG[i](aip)= \\
& \ \ \ \ \ ((aip>> (i*\delta) )\ XOR\ RHFG[0](aip) ) mod (2^q)
\end{split}
\end{equation}

In equation \ref{eqt_RHFG_restHashes}, $1 \leq i \leq r-1$, $aip \in ANet$. $\delta$ is an positive integer that smaller than q and $(r-2)*\delta+q \geq 32$. ``XOR" is the bit wise exclusive or operation. "$>>$" is the bit wise right shift operation. According to the property of ``XOR", we can recover $(aip>> (i*\delta)) mod (2^q)$ by equation \ref{eqt_RHFG_bits_from_xor_cid}.

\begin{equation}\label{eqt_RHFG_bits_from_xor_cid}
\begin{split}
& (aip>> (i*\delta) ) mod (2^q)= \\
&  \ \ \ \ \ RHFG[i](aip)\ XOR\ RHFG[0](aip)  
\end{split}
\end{equation}

$(aip>> (i*\delta) ) mod (2^q)$ is q successive bits of $aip$ starting from $i*\delta$, written as $B(i)$. Because $(r-2)*\delta+q \geq 32$, every bit in $aip$ will appear in some $B(i)$ where $1 \leq i \leq r-1$. For a host $aip \in ANet$, let RHFG(aip) represent the array of r hashed values where $RHFG(aip)=\{RHFG[0](aip),RHFG[1](aip)$ $,\cdots,RHFG[r-1](aip)\}$. $aip$ could be regained from $RHFG(aip)$ by extracting bits from $B(i)$.

$RHFG$ has high randomness and reversible ability. It is used to select $r$ sliding rough estimators from each row of RSRA for every host in $ANet$. For a host $aip$ in $ANet$, its r sliding rough estimators are denoted as $RSRA(aip)=\{RSRA[0,RHFG[0](aip)]$ $, RSRA[1,RHFG[1](aip)]$ $,\cdots,$ $RSRA[r-1,RHFG[r-1]$ $(aip)]\}$. A IP pair is a set of two IP addresses extracting from a packet where one address is in $ANet$ and the other is in $BNet$. When a IP pair comes, these r sliding rough estimators will be updated at the same time as shown in algorithm \ref{alg_UpdateRSRA_oneIPpair}.

\begin{algorithm}                       
\caption{Update RSRA}          
\label{alg_UpdateRSRA_oneIPpair}                            
\begin{algorithmic}                    
\Require {\\IP pair $<aip,bip>$,\\
Reversible hash functions group $RHFG$,\\
Reversible sliding rough estimator array $RSRA$} 
\State  $hbip<-H_1(bip)$
\State $lsb<-LSB(hbip)$
\If{$lsb < \tau$}
\State Return
\EndIf
\State $DRidx \Leftarrow H_2(bip)$  
\For{$i \in [0,r -1]$}
\State $COLidx \Leftarrow RHFG[i](aip)$ 
\State $sre \Leftarrow RSRA[i, COLidx]$
\State $sre[DRidx] \Leftarrow 0$ \label{alg-line-updateRSRA_SEupdate}
\EndFor 
\end{algorithmic}
\end{algorithm}
Algorithm \ref{alg_UpdateRSRA_oneIPpair} describes how to update RSRA for a IP pair. But there are millions of IP pairs every second for example in a 40 Gb/s network. Dealing with these IP pairs one by one will consume much time for a single thread. In algorithm \ref{alg_UpdateRSRA_oneIPpair} only line \ref{alg-line-updateRSRA_SEupdate} update memory while others are computing operations such as getting sliding estimator index in RSRA, calculating which distance recorder to be set. 
A distance recorder could be set to zero multi times which makes sure that there is no need to synchronize among memory access and several IP pairs could update RSRA at the same time. 

Nowadays CPU contains several cores, from 2 to 22 or more such as Intel E5-2699v4. When exploiting all cores of CPU to scan IP pairs parallel, the processing speed will be raised. But the memory bandwidth of CPU will limit the increment. What's more, the price of CPU grows rapidly with the number of cores because the single core of CPU is so powerful, high frequency and complex control ability, that it occupies much space on chip. 

Unlike CUP's core, each core of GPU is a little simple, lower frequency and fewer controlling unit, but occupies much smaller space. So a GPU could contain hundreds or even thousands of cores in a chip easily.  The total computation ability of GPU is much stronger than that of CPU. And GPU has a lower memory access latency because it has several memory controllers for multi threads. For tasks, which dealing with different data by the same instructions, GPU can acquire a high speed-up. IP pair scanning is such a task.

IP pair scanning consumes the most time in sliding super points detection because the huge number of IP pairs appearing in every slot. Every IP pair is processed by the same algorithm, algorithm \ref{alg_UpdateRSRA_oneIPpair}. So thousands of threads running algorithm \ref{alg_UpdateRSRA_oneIPpair} could be launched to scan thousands of IP pairs at the same time. Figure \ref{GPU_IPpairScan} illustrates how to detect sliding super point on GPU.

\begin{figure}[!ht]
\centering
\includegraphics[width=0.47\textwidth]{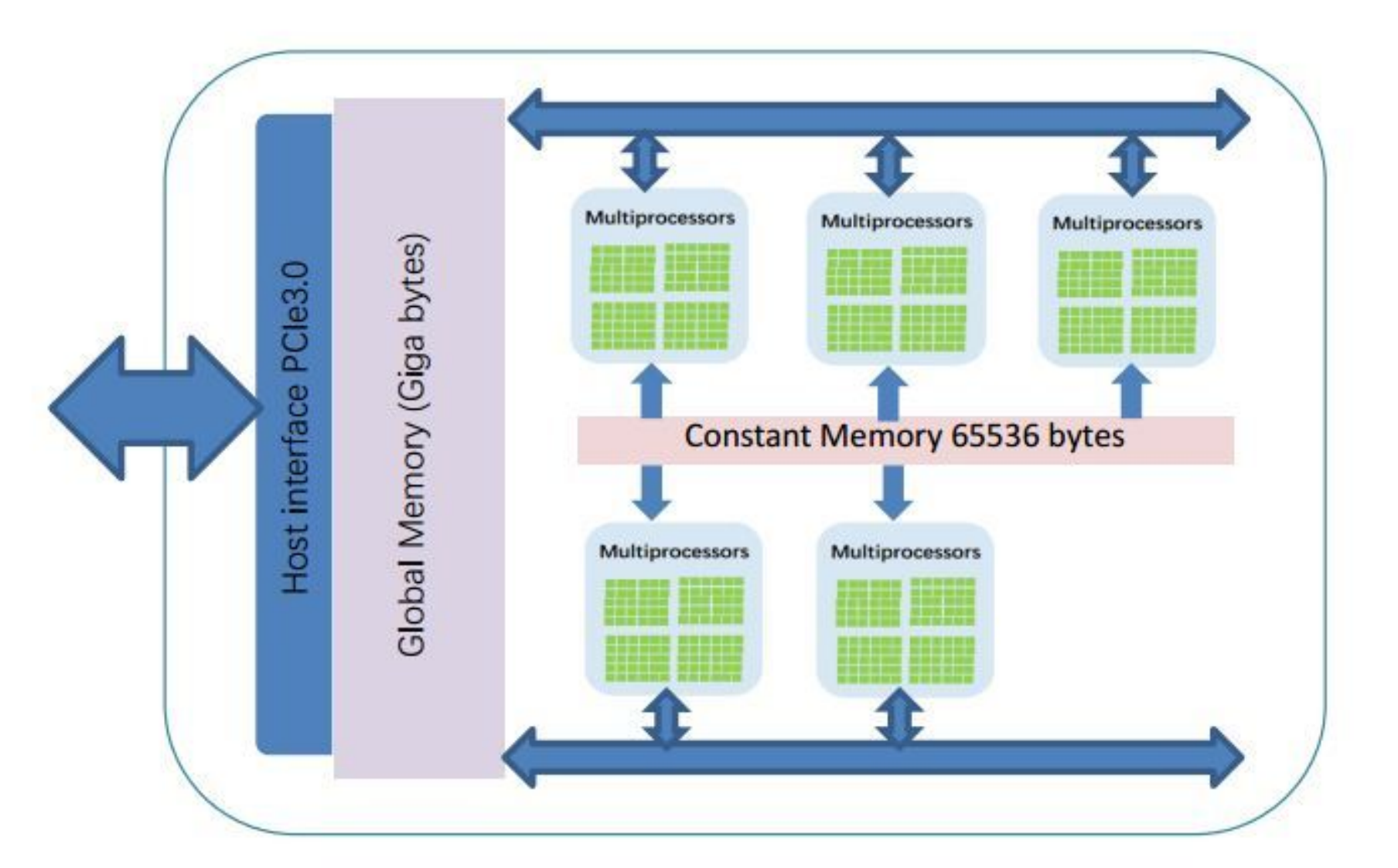}
\caption{Hosts scan on GPU}
\label{GPU_IPpairScan}
\end{figure}

IP pair will be copied to GPU's global memory by PCI-E bus.  A IP pair buffer on GPU memory, which can contain $\alpha$ IP pairs, is allocated to receiving IP pairs. When the buffer is full or IP pairs in a slot are all copied, the same number of threads, as the number of receiving IP pairs, will be launched on GPU to process these IP pairs. Every thread reads one IP pair from global memory and update a distance recorder in RSRA which locating in global memory too. For IP v4 address, the buffer of IP pair occupies $8*\alpha$ bytes. When $\alpha$ is set to $2^{15}$, this buffer needs 256 KB memory. The graphic memory on GPU, ranging from 1 GB to 11 GB, is big enough to hold it. Although the RSRA requires more memory than IP pair buffer, the global memory is plenty enough to store a RSRA which is big enough for a 40 Gb/s networks. Other running parameters, such as hash function parameters, r, q and $\delta$, are stored in the constant memory which is read only but has high speed. A low cost GPU, which can be brought within 200 dollars, is fast enough to scan IP pairs in 40 Gb/s in real time. RSRA could be deployed in many nodes at different places to processes packets in the distributing environment. Every node should maintain the same size $RSRA$ and same hash functions' parameters.

After scanning all IP pairs in a slice, sliding super point will be reconstructed from RSRA. If there are many nodes in the distributing environment, $RSRA$ in these nodes should be merged to a global one by $SRE$ merging and sliding super point will be detected from this global $RSRA$. According to the feature of RHFG, if $RSRA(aip)$ is known, $aip$ could be restored from it. But $RSRA(aip)$ is not stored directly. According to the definition, if $aip$ is a sliding super point, every sliding rough estimator in $RSRA(aip)$ will contain no less than $\eta*\rho$ short integers whose value is smaller than $k$. The sliding rough estimator whose weight is no less than $\eta * \rho$ is called as hot sliding rough estimator denoted by $HSE$.  

A candidate tuple $CT$ consists of r $HSE$s could be acquired by selecting a $HSE$ from every $HSE(i)$ where $ 0 \leq i \leq r-1$. $CT=\{he_0,he_1,he_2,\cdots,he_{r-1}\}$ where $he_i \in HE(i)$. Sliding super point could be regained by testing all of these candidate tuples. 

Two additional buffers of candidate tuples are used in this algorithm, one for storing and the other for reading. Their roles exchange in different levels when adding hot estimators in different rows. Let $SCTB$ point to the candidate tuple buffer for storing and $RCTB$ point to the candidate tuple buffer for reading.

Candidate tuple in these two buffers grows incrementally from empty to a valid tuple containing $r$ hot estimators in different rows. Let $CTB_1$ and $CTB_2$ represent these two buffers respectively. Figure \ref{GPU_candidateTupleGrow} shows how candidate tuple grows with two buffers' support.

\begin{figure}[!ht]
\centering
\includegraphics[width=0.47\textwidth]{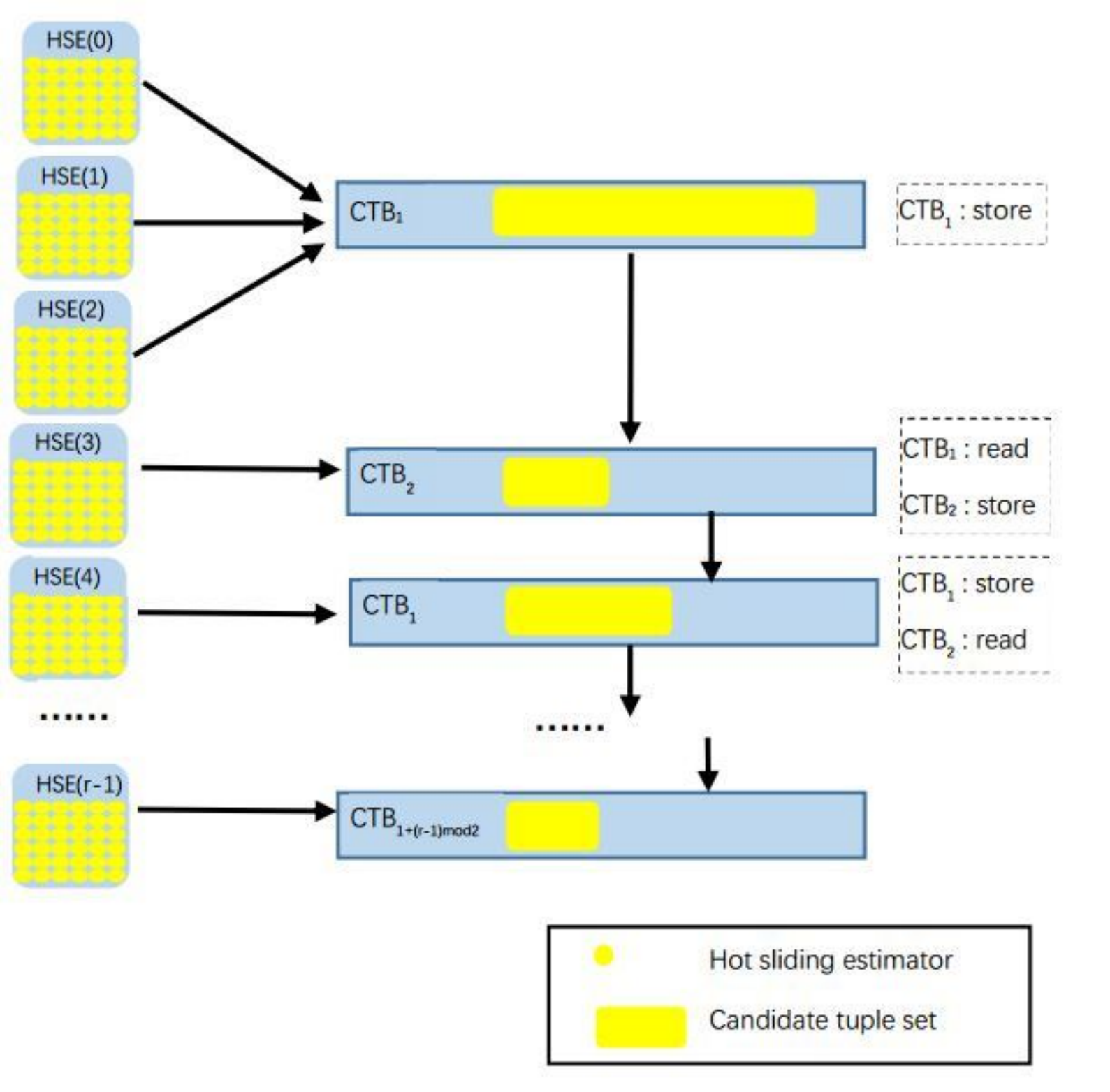}
\caption{Regain sliding super points on GPU}
\label{GPU_candidateTupleGrow}
\end{figure}

These two candidate tuple buffers are located on GPU's global memory. Candidate tuple $CT_2=\{he_0,he_1,he_2\}$ with three hot estimators, selected from $HSE(0)$, $HSE(1)$, $HSE(2)$ separately, will be inserted into $CTB_1$ after checking. The checking procedure is to test if $B(1)$ and $B(2)$ extracted from $CT_2$ is valid, if the left $q-\delta$ bits of $B(1)$ is the same as the right $q-\delta$ bits of $B(2)$. Only when passing the test, will $CT_2$ be added to $CTB_1$.  When $q-\delta$ is big, only a small part of such candidate tuple will appear in $CTB_1$. The memory updating latency caused by candidate tuple insertion will be concealed by the huge parallel running threads on GPU. So candidate tuple checking determines the time consumption of a thread. When every thread deals with the same amount of candidate tuples, they will finish approximately at the same time. In this situation, the load of every thread is balance and GPU realizes its full potential.

There are total $Q=|HSE(0)|* |HSE(0)|* |HSE(0)|$ candidate tuples like $CT_2$. Suppose $V$ threads are launched on GPU to deal with these candidate tuples. Let $U$, $V$, $Q$ and $W$ be non-negative integers. In order to let every thread has the same candidate tuples to check, each thread will be assigned at least $U=Q/V$ candidate tuples evenly. Still there are $W=Q\ mod\ V$ candidate tuples rest. In these $V$ threads, every of first $W$ threads has $U+1$ candidate tuple and every of the rest $V-W$ threads has $U$ candidate tuple. Let $CT_2(i)$ represent the set of candidate tuples to be tested by the $i$th threads in GPU which can be acquired from $HSE(0)$, $HSE(1)$, $HSE(2)$. Algorithm \ref{alg_candidateBuf_H012} shows how every thread checks candidate tuples. 

\begin{algorithm}                       
\caption{global function on GPU\\
Generate candidate tuple $CT_2$}          
\label{alg_candidateBuf_H012}
\begin{algorithmic}                    
\Require {\\  
Hot sliding estimator set $HSE(0)$, $HSE(1)$, $HSE(2)$, \\
Storing candidate tuple buffer $SCTB$
\\} 
 
\State $TID \Leftarrow $ thread index 
\State get candidate tuple set from $HSE(0)$, $HSE(1)$, $HSE(2)$
\State $CT_2(TID) \Leftarrow$ candidate tuple set to be tested by this thread 
\For{$ct=\{he_0,he_1,he_2\} \in CT_2(TID) $ }
\State $B(1)\ \Leftarrow\ he_0\ XOR\ he_1$    \label{alg_line_gpu_h012_check_start}
\State $B(2)\ \Leftarrow\ he_0\ XOR\ he_2$
\If{ left $q-\delta$ bits of $B(0)$ not equal to right $q-\delta$ bits of $B(1)$} \label{alg_line_gpu_h012_check_end}
\State Continue
\EndIf                                     
 \State insert $ct$ into $SCTB$
\EndFor
\end{algorithmic}
\end{algorithm}

$CT_2(TID)$ could be acquired from $HSE(0)$, $HSE(1)$, $HSE(2)$ according to the index of a GPU thread. When testing candidate tuple in $CTB_2(TID)$, valid candidate tuple which passes checking process from line \ref{alg_line_gpu_h012_check_start} to \ref{alg_line_gpu_h012_check_end} will be stored in $SCTB$ for further checking with hot estimators in other rows.

When all threads finished, $CTB_1$ which has stored all valid candidate tuples extracting from the first three rows will work as reading buffer and the other buffer, $CTB_2$ will be used for storing new candidate tuple as shown in figure \ref{GPU_candidateTupleGrow}.

For $HSE(i)$ where $i \geq 3$, a new candidate tuple for checking is generated from a candidate tuple in reading tuple buffer, candidate tuple buffer which has stored valid candidate tuple, and a hot estimator in it. Then $Q=|RCTB|*|HSE(i)|$ where $|RCTB|$ means the number of candidate tuple storing in reading candidate tuple buffer. When $i$ is an odd number, $RCTB$ points to $CTB_1$, $SCTB$ points to $CTB_2$; when $i$ is an even number, $CBT_1$ and $CBT_2$ exchange roles. A new candidate tuple consists of a hot estimator in $HSE(i)$ and a candidate tuple in $RCTB$. The set of such new candidate tuple to be checked by the $j$th thread, $CT_i(j)$, could be generated from $HSE(i)$ and $RCTB$. Algorithm \ref{alg_updateCandidateTuple} shows how to check new candidate tuples.

\begin{algorithm}                       
\caption{global function on GPU\\
Update candidate tuple}          
\label{alg_updateCandidateTuple}
\begin{algorithmic}                    
\Require {\\  
Row index $i$,\\
Hot estimators set $HSE(i)$,\\
Storing candidate tuple buffer $SCTB$,\\
Reading candidate tuple buffer $RCTB$
\\}  
\State $TID \Leftarrow $ thread index 
\State $CT_i(TID) \Leftarrow$ get new candidate tuple from $HSE(i)$ and $RCTB$ 
\For{$ct=\{he_0,he_1,he_2,\cdots,he_{i-1},he_{i}\} \in CT_2(TID) $}
\State $B(i-1) \Leftarrow he_0\ XOR\ he_1$    
\State $B(i) \Leftarrow\ he_0\ XOR\ he_2$
\If{ left $q-\delta$ bits of $B(0)$ not equal to right $q-\delta$ bits of $B(1)$} 
\State Continue
\EndIf                                     
 \State insert $ct$ into $SCTB$
\EndFor
\end{algorithmic}
\end{algorithm}

When checking a candidate tuple newly adding a hot estimator in $HSE(i)$, only $B(i-1)$ and $B(i)$ should be tested. After update candidate tuple with the last row, $SCTB$ contains candidate tuple from which a valid host could be reconstructed. Set $Q=|SCTB|$, $U=\frac{Q}{V}$ and launch $V$ threads. Every thread scans $U$ or $U+1$ reconstructed hosts to estimate their opposite number according their union sliding estimators in the candidate tuple and check if they are sliding super points. By this method, every thread on GPU has the similar load with the cost of additional buffers for storing middle candidate tuples. Nowadays GPU has plenty global memory and the buffers not occupy many space because the number of sliding super points takes up a small part of hosts. 

$RSRA$ reconstructs sliding super points fast on GPU. But it can't get their cardinality and some fake host may hide in the candidate list. This problem will be solved by the method proposed in the next section.
\section{Memory efficient sliding cardinality estimation}
Linear estimator, $LE$, is a famous cardinality estimation algorithm\cite{DC:aLinearTimeProbabilisticCountingDatabaseApp}. It uses ${\eta}'$ bits, which are initialized to 0 at the beginning of a discrete time window, to estimate host's cardinality. When scanning a host $bip$ in $ST(aip,t,1)$, one bit in $LE$ selected by hash function $H_3$ will be set. $H_3(bip)$ maps $bip$ to a random value between 0 and ${\eta}'-1$. Let $|LE|$ represent the weight of $LE$, which means the number of 1 bit in it. At the end of a discrete time window, $|OH(aip,t,1)|$ will be estimated by the following equation.

\begin{equation}\label{eq_LE_cardinalityEst}
 \hat{|OH(aip,t,1)|}= - {\eta}'*ln(\frac{{\eta}'-|LE|}{{\eta}'})
\end{equation} 

But $LE$ only works when $k=1$. Like $SRE$, a sliding version of $LE$, $SLE$, is devised by replacing ${\eta}'$ bits in $LE$ with ${\eta}'$ short integers. Every short integer has the same operations as that in $SRE$. The $k$ weight of $SLE$ denoted as $|SLE|^k$ is the number of short integer whose value is smaller than $k$. $SLE$ estimates a host's cardinality by equation \ref{eq_SLE_cardinalityEst}.

\begin{equation}\label{eq_SLE_cardinalityEst}
 \hat{|OH(aip,t,1)|}= - {\eta}'*ln(\frac{{\eta}' - |SLE|^k}{{\eta}'})
\end{equation} 

To avoid allocating a $SLE$ for every host, a $SLE$ array, $SLEA$, is used to estimate all hosts' cardinalities. $SLEA$ contains ${r}'$ rows and every row has $2^{{q}'}$ $SLE$s. In order to have a high accuracy estimation of sliding super point's cardinality, ${\eta}'$ should be big enough. Generally, ${\eta}'$ should be no less than half of a sliding super point's cardinality \cite{DC:aLinearTimeProbabilisticCountingDatabaseApp}. But big ${\eta}'$ causes great memory consumption. To reduce the memory consumption of $SLEA$, two adjacent $SLE$s in a row share ${\eta}'-{\delta}'$ short integers. $SLEA$ could be regarded as an array of short integers with $r$ rows and $2^{{q}'}*{\delta}'+{\eta}'-{\delta}'$ columns as shown in figure \ref{fig_SLEA}.

\begin{figure}[!ht]
\centering
\includegraphics[width=0.47\textwidth]{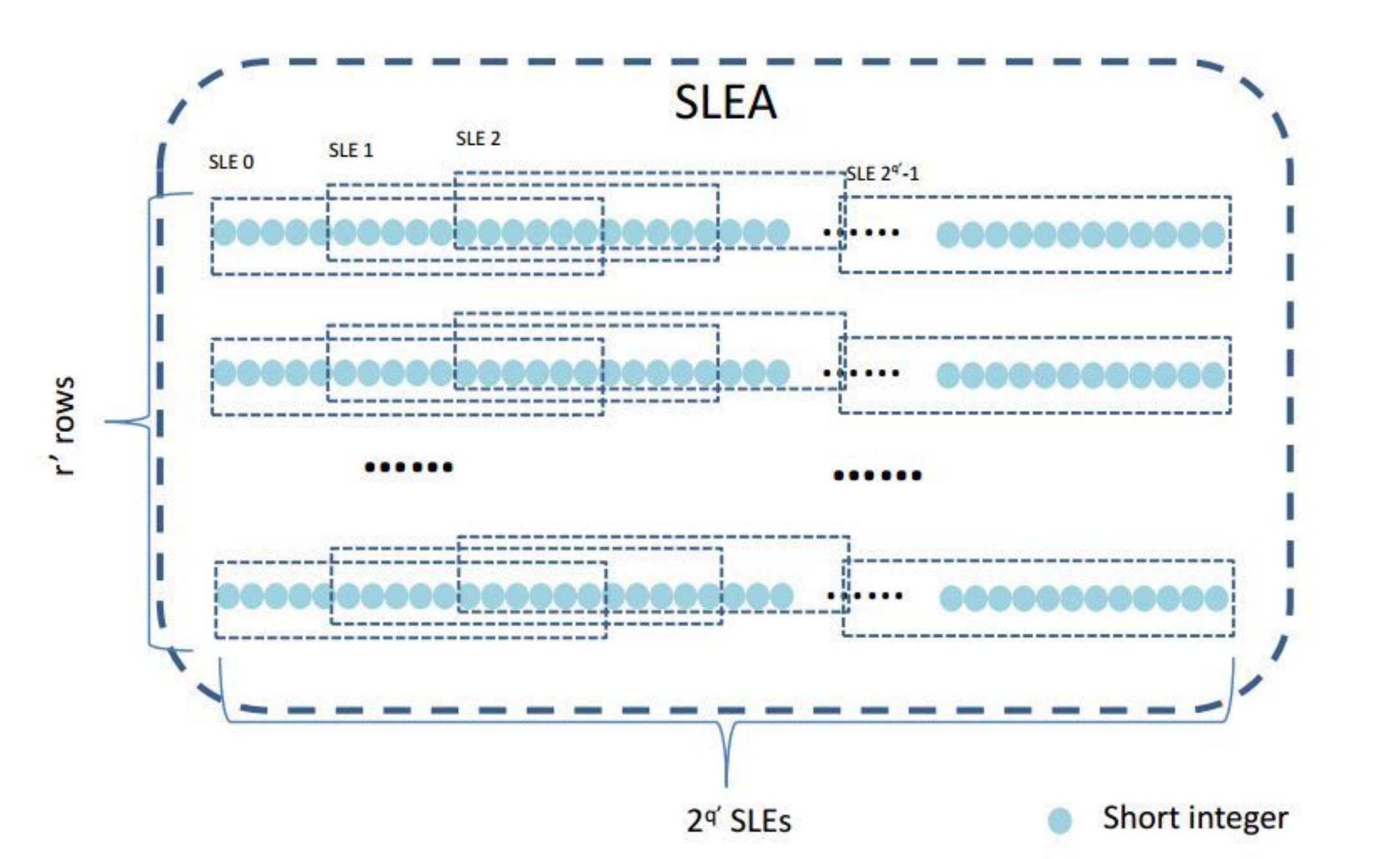}
\caption{Sliding linear estimator array}
\label{fig_SLEA}
\end{figure}

${\delta}'$ is the offset of short integers between two adjacent $SLE$s. When ${\delta}'$ is set to ${\eta}'$, no short integers will be shared by different $SLE$s in a row and there are total ${\eta}'*r*2^{{q}'}$ short integers. The memory reduction rate $MRR$ is defined below.
\begin{equation}\label{eq_SLEA_mrr}
 MRR({\delta}')=1- \frac{2^{{q}'}*{\delta}'+{\eta}'-{\delta}'}{{\eta}'*2^{{q}'}}
\end{equation}

Considering that $2^{{q}'}$ is much bigger than ${\eta}'-{\delta}'$, $MRR({\delta}')$ is determined by $\frac{{\delta}'}{{\eta}'}$. A small ${\delta}'$ will save memory greatly.
But ${\delta}'$ must be bigger than zero because when short integer offset is set to 0, there is only one $SLE$ in a row to record all hosts' cardinalities. For every IP pair $<aip,bip>$, $r'$ $SLE$s in each row will be selected by $r'$ random hash functions $LH_i(aip)$ where $LH_i$ maps $aip$ to a random value between 0 and $2^{{q}'}-1$. For every selecting $SLE$, a short integer determined by $H_3(bip)$ will be set to 0.

$SLEA$ could scan several IP pairs at the same time. High processing speed will be acquired if deployed it on GPU to run parallel. $SLEA$ locates on the global memory of GPU. When the IP pairs buffer is full, thousands of threads will be launched to process them at the same time. Each thread on GPU will run algorithm \ref{alg_SLEA_scan} to deal with a IP pair.

\begin{algorithm}                       
\caption{Sliding linear estimator array scans IP pair}          
\label{alg_SLEA_scan}
\begin{algorithmic}                    
\Require {\\  
Sliding linear estimator array $SLEA$,\\
IP pair $<aip,bip>$
\\}  
\State $sInt \Leftarrow H_3(bip)$
\For {$i \Leftarrow [0,r'-1]$}
\State $ \Leftarrow LH_i(aip)$
\State $SLEA[i,j][sInt] \Leftarrow 0$
\EndFor
\end{algorithmic}
\end{algorithm}
In algorithm \ref{alg_SLEA_scan}, $SLEA[i,j]$ points to the $j$th $SLE$ in the $i$th row.

Several GPU nodes could be used to scan different IP pairs in a distribute environment. Every node has a $SLEA$ with the same size: same rows, same columns number and same hash functions. When estimating hosts cardinality at the end of a time slice, $SLEA$ in different nodes should be merged together by algorithm \ref{alg_SLEA_merge}.

\begin{algorithm}                       
\caption{Sliding linear estimator array scans IP pair}          
\label{alg_SLEA_merge}
\begin{algorithmic}                    
\Require {\\  
Sliding linear estimator array set: \\
$SS=\{SLEA_0,SLEA_1,\cdots,SLEA_{n-1}\}$\\
Distributed nodes number $n$
\\}  
   \Ensure{Global sliding linear estimator array $GSLEA$}
\State Initialize $GSLEA$ 
\For{ $i \in [0, {r}'-1]$}
\For{ $ j \in [0, 2^{{q}'}*{\delta}'+{\eta}'-{\delta}'-1]$}
\State $v \Leftarrow 0$
\For {$ z \in [0,n-1]$}
\If {$ v< SLEA_z[i][j]$}
\State $v \Leftarrow SLEA_z[i][j]$
\EndIf
\State $GSLEA[i][j] \Leftarrow v$
\EndFor
\EndFor
\EndFor
\end{algorithmic}
\end{algorithm}

$SLEA_z[i][j]$ points to the $j$th short integer in the $i$th row of $SLEA$ in the $z$th node. Global $SLEA$ contains all hosts' cardinalities information and $GSLEA$ will be used to estimate cardinalities of candidate super points acquired by $SREA$. A $SLE$ in $SLEA$ will be shared by many hosts. For a certain host $aip$, there are ${r}'$ $SLE$s relating with it. Merging these $SLE$s to get a union one $USLE$ and estimating $|OH(aip,t,k)|$ from it could reduce the impact of other hosts. But in the $USLE$, some short integers would still be set by other hosts, especially when most of the short integers in $SLEA$ set to 0. 
\begin{definition}[SLEA row setting factor]
\label{def-SLEA_rsf}
For the $i$th row of $SLEA$, its setting factor $SF(i,k)$ is the ratio of the number of short integers in the $i$ row whose value is smaller than $k$ to $2^{{q}'}*{\delta}'+{\eta}'-{\delta}'$.
\end{definition}

$SF(i,k)$ reflects the usage of a row in $SLEA$. A big $SF(i,k)$ means that the $SLEA$ is used efficiently, but a host's $USLE$ will be effected by other hosts heavily. In order to remove this effect, the number of error setting short integers should be calculated. 

When $USLE$ is used by $aip$ exclusively, $|SLE|^k$ are expected to be $|SLE|^k={\eta}'- {\eta}'*e^{ - \frac{|OH(aip,t,k)|}{{\eta}'}}$ according to equation \ref{eq_SLE_cardinalityEst}. When $USLE$ contains short integers set by other hosts, the number of these error setting integers are expected to be $({\eta}'-|SLE|^k) * \prod_{i=0}^{{r}'-1}SF(i,k)$. Remove this value from $|USLE|^k$ and the rest value are expected to be $|SLE|^k$ as shown in equation \ref{eq_SLE_rmSInt_exptN}.
\begin{equation}\label{eq_SLE_rmSInt_exptN}
|SLE|^k = |USLE|^k - ({\eta}'-|SLE|^k) * \prod_{i=0}^{{r}'-1}SF(i,k)
\end{equation} 

The expectation of $|SLE|^k$ is acquired by modifying equation \ref{eq_SLE_modifiedWeight}.
\begin{equation}\label{eq_SLE_modifiedWeight}
\hat{|SLE|^k} =\frac{|USLE|^k-{\eta}'* \prod_{i=0}^{{r}'-1}SF(i,k)}{1-\prod_{i=0}^{{r}'-1}SF(i,k)}
\end{equation} 

Estimating $|OH(aip,t,k)|$ by $\hat{|SLE|^k}$ acquires a higher accuracy than using $|USLE|^k$ directly.

When estimating a host's cardinality, there are only reading operation to $GSLEA$. So several hosts' cardinalities could be estimated parallel in GPU. Algorithm \ref{alg_SLEA_CardinalityEST} describes how to estimate the cardinality of a given host from the $GSLEA$.

\begin{algorithm}                       
\caption{Sliding linear estimator array scans IP pair}          
\label{alg_SLEA_CardinalityEST}
\begin{algorithmic}                    
\Require {\\  
Global sliding linear estimator array $GSLEA$\\
Candidate sliding super point $aip$\\
}  
   \Ensure{Cardinality estimation ${|OH(aip,t,k)|}'$}
\State $j \Leftarrow LH_0(aip)$
\State $USLE \Leftarrow SLEA[0,j]$
\For{$i \in [1,{r}'-1]$}
\State $j \Leftarrow LH_i(aip)$
  \For {$z \in [0, {\eta}'-1]$}
     \If {$USLE[z] < SLEA[0][j*{\delta}'+z]$}
       \State $USLE[z] < SLEA[0][j*{\delta}'+z]$
     \EndIf
   \EndFor
\EndFor
\State $\hat{|SLE|^k} =\frac{|USLE|^k-{\eta}'* \prod_{i=0}^{{r}'-1}SF(i,k)}{1-\prod_{i=0}^{{r}'-1}SF(i,k)}$
\State $\hat{|OH(aip,t,k)|}' \Leftarrow - {\eta}'*ln(\frac{{\eta}' - |SLE|^k}{{\eta}'})$
\State Return $\hat{|OH(aip,t,k)|}'$
\end{algorithmic}
\end{algorithm}

Algorithm \ref{alg_SLEA_CardinalityEST} calculates the cardinality of every host in the candidate sliding super point list generated by $SREA$ and remove these hosts whose estimation is smaller than $\theta$ to improve the detection accuracy. Both $SREA$ and $SLEA$ could be updated parallel. With this algorithm, a common GPU is strong enough to detect sliding super points and estimate their cardinalities of the core network whose speed is as high as 40 Gb/s.
\section{Experiments and analyze}
We use a real world traffic to evaluate the performance of this sliding super point cardinality estimation algorithm SRLG. The traffic is OC192 downloading from Caida\cite{expdata:Caida}. This traffic contains one hour packets last from 13:00 to 14:00 on February 19, 2015. In our experiment, the threshold $\theta$ for super point is 1024. First we compare $SRLE$ with other algorithms under discrete time window. In the discrete time window, a time slice is set to 5 minutes. Under this time period, the one-hour traffic is divided into 12 sub traffics and we will detect super points from them. Table \ref{tbl-trafficInf} shows the detail information of every sub traffic. 
\begin{table*}
\centering
\caption{Traffic information}
\label{tbl-trafficInf}
\begin{tabular}{c}                                                                                                                                                                                                                           
\centering
\includegraphics[width=0.8\textwidth]{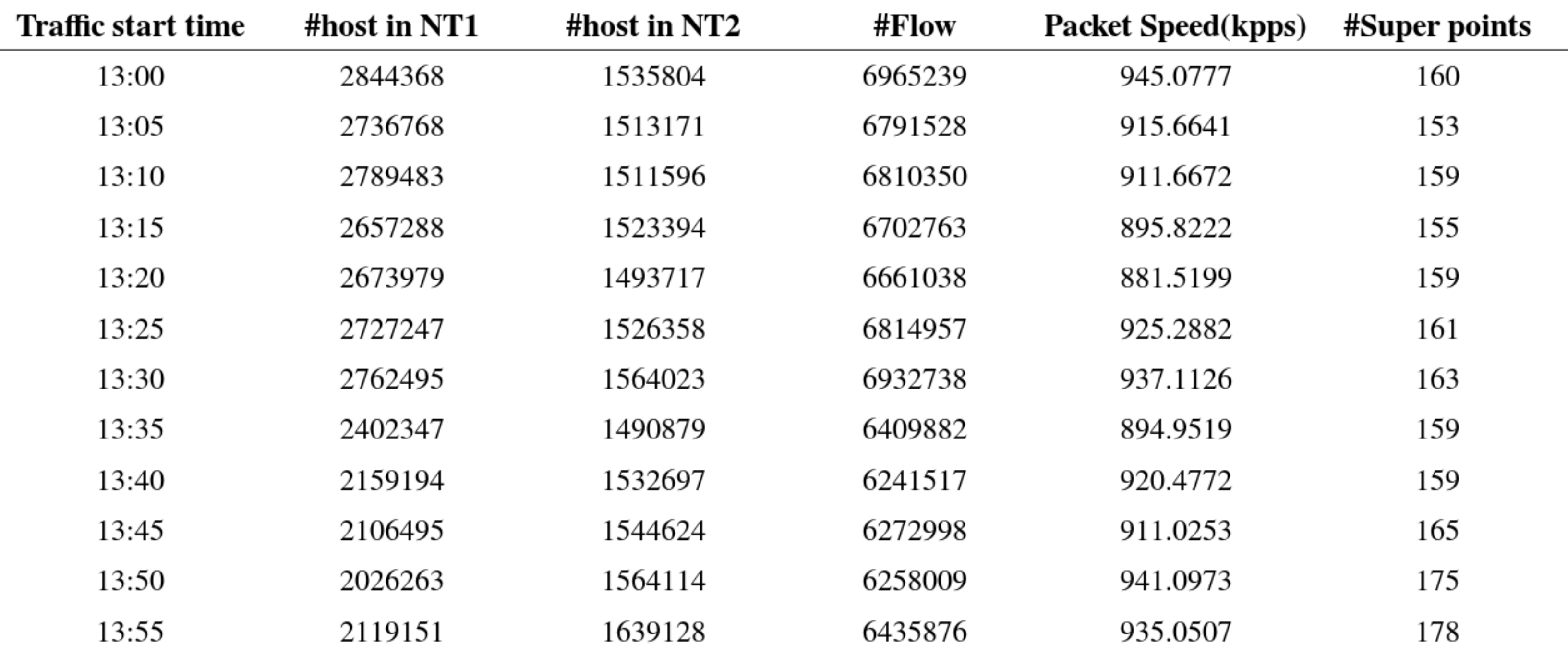}
\end{tabular}
\end{table*}

Accuracy, time consumption and memory requirement are three criteria to evaluate super point detection algorithm. False positive ratio FPR and False negative ratio FNR are two classic rates for accuracy comparing. They are given in definition \ref{def-fpr_fnr}.
\begin{definition}[FPR/FNR]
\label{def-fpr_fnr}
For a traffic with $N$ super points, an algorithm detects $N'$ super points. In the $N'$ detected super points, there are $N^+$ hosts which are not super points. And there are $N^-$ super points which are not detected by the algorithm. FPR means the ratio of $N^+$ to $N$ and FNR means the ratio of $N^-$ to $N$.
\end{definition}

FPR may decrease with the growth of FNR. If an algorithm reports more hosts as super point, its FNR will decrease but FPR will increase. So we use the sum of FPR and FNR, total false rate TFR, to evaluate the accuracy of an algorithm. 

To compare the performance of SRLG with other algorithms, we use DCDS\cite{HSD:ADataStreamingMethodMonitorHostConnectionDegreeHighSpeed}, VBFA\cite{HSD:DetectionSuperpointsVectorBloomFilter}, GSE \cite{HSD:GPU:2014:AGrandSpreadEstimatorUsingGPU} to compare with it. All of these algorithms are running on a common GPU card: GTX950 with 680 CUDA cores and 4 GB memory. The parameters of SRLG are: $\delta=5$, ${\delta}'=16$, ${\eta}'=2^{14}$, $\eta=8$, $q=q'=17$, $r=r'=5$.
Table \ref{tbl_avg_hsd_rlt} lists the average result of all the 12 sub traffics.

\begin{table*}
\centering
\caption{Average detection result}
\label{tbl_avg_hsd_rlt}
\begin{tabular}{c}                                                                                                                                                                                                                           
\centering
\includegraphics[width=0.7\textwidth]{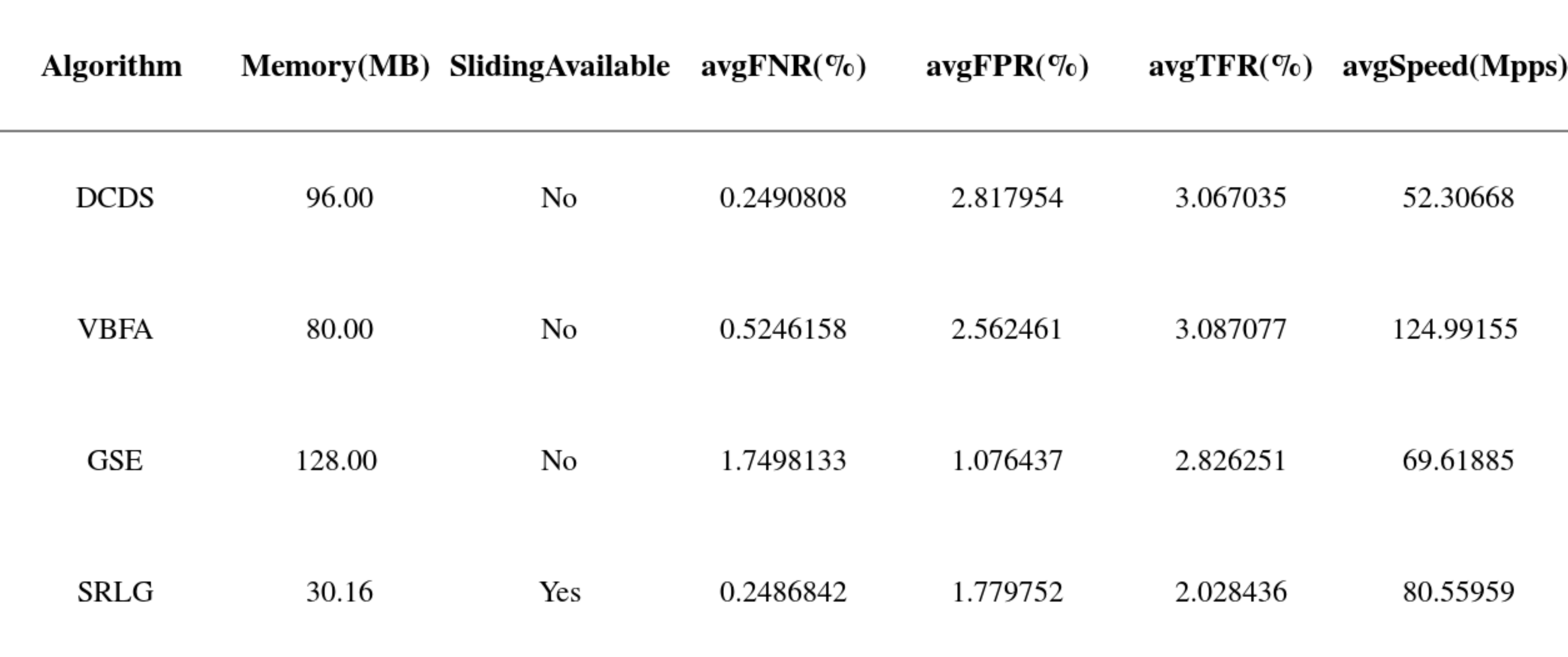}
\end{tabular}
\end{table*}

GSE has a lower FPR than other algorithms. It can remove fake super points according the estimating cardinality. But GSE may remove some super points too, which causes it has a higher FNR. Because it uses discrete bits to record host's cardinality, collecting all of these bits together when estimating super points cardinality will use lots of time. DCDS uses CRT when storing host's cardinality. CRT has a better randomness which makes DCDS has a lower FNR. But CRT is very complex containing many operations. So DCDS's speed is the lowest among all of these algorithms. VBFA has the fastest speed but its TFR is higher than that of SRLG.

From table \ref{tbl_avg_hsd_rlt} we can see that, SRLG uses the smallest memory, smaller than half of others' memory, and has the lowest total false rate. SRLG is the only one which can work under sliding time window. 

In the sliding time window experiments, a time slice is set to 1 second and $k$ is 300. SRLG's FPR, FNR and TFR are illustrated in figure \ref{fig_expSW_FPR}, \ref{fig_expSW_FNR} and \ref{fig_expSW_TFR}.
\begin{figure*}[!ht]
\centering
\includegraphics[width=0.8\textwidth]{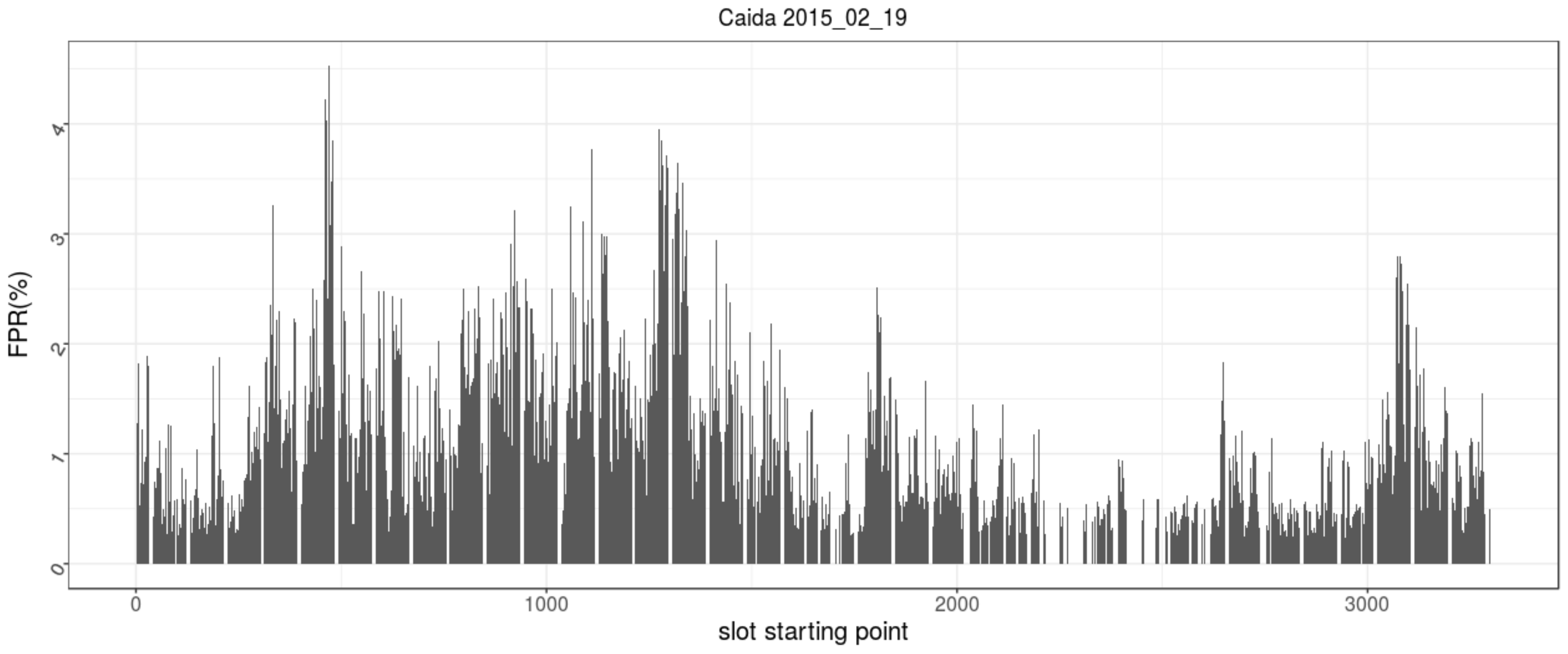}
\caption{Sliding time window FPR}
\label{fig_expSW_FPR}
\end{figure*}

\begin{figure*}[!ht]
\centering
\includegraphics[width=0.8\textwidth]{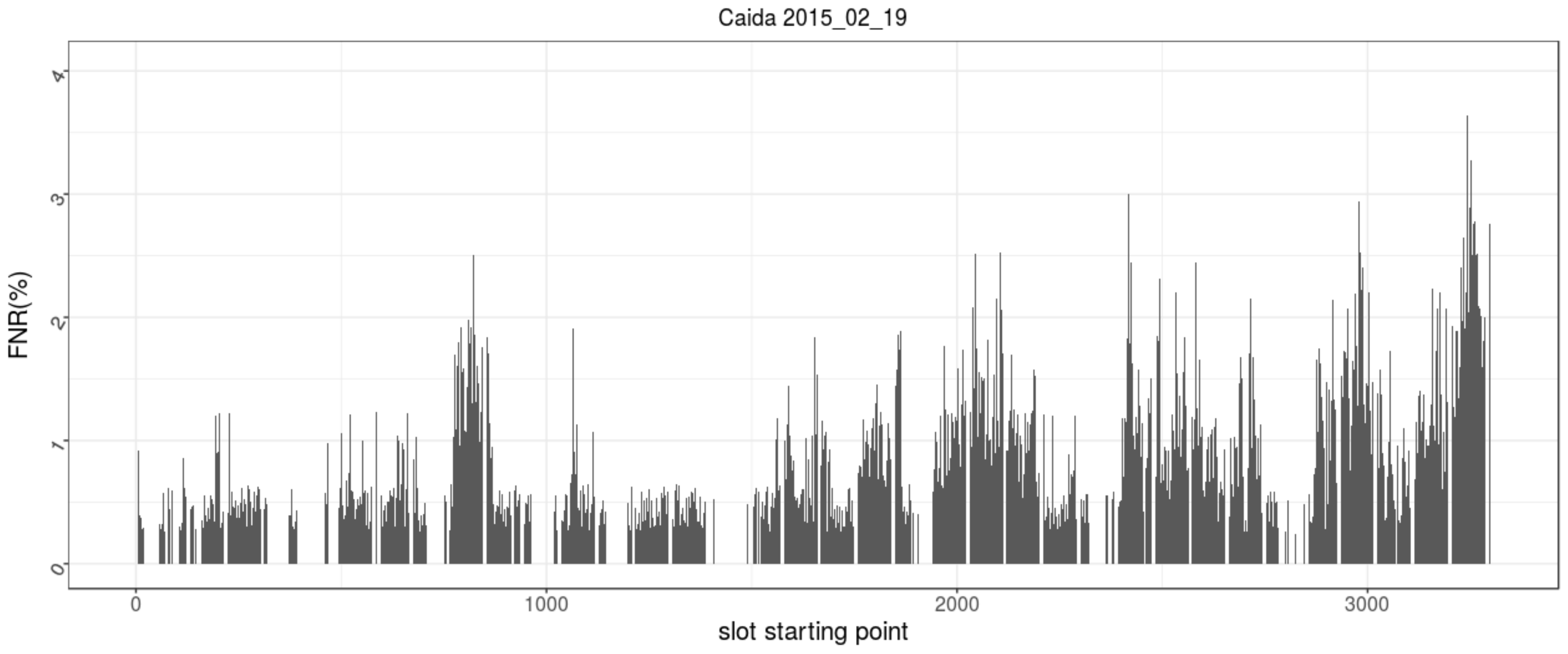}
\caption{Sliding time window FNR}
\label{fig_expSW_FNR}
\end{figure*}

\begin{figure*}[!ht]
\centering
\includegraphics[width=0.8\textwidth]{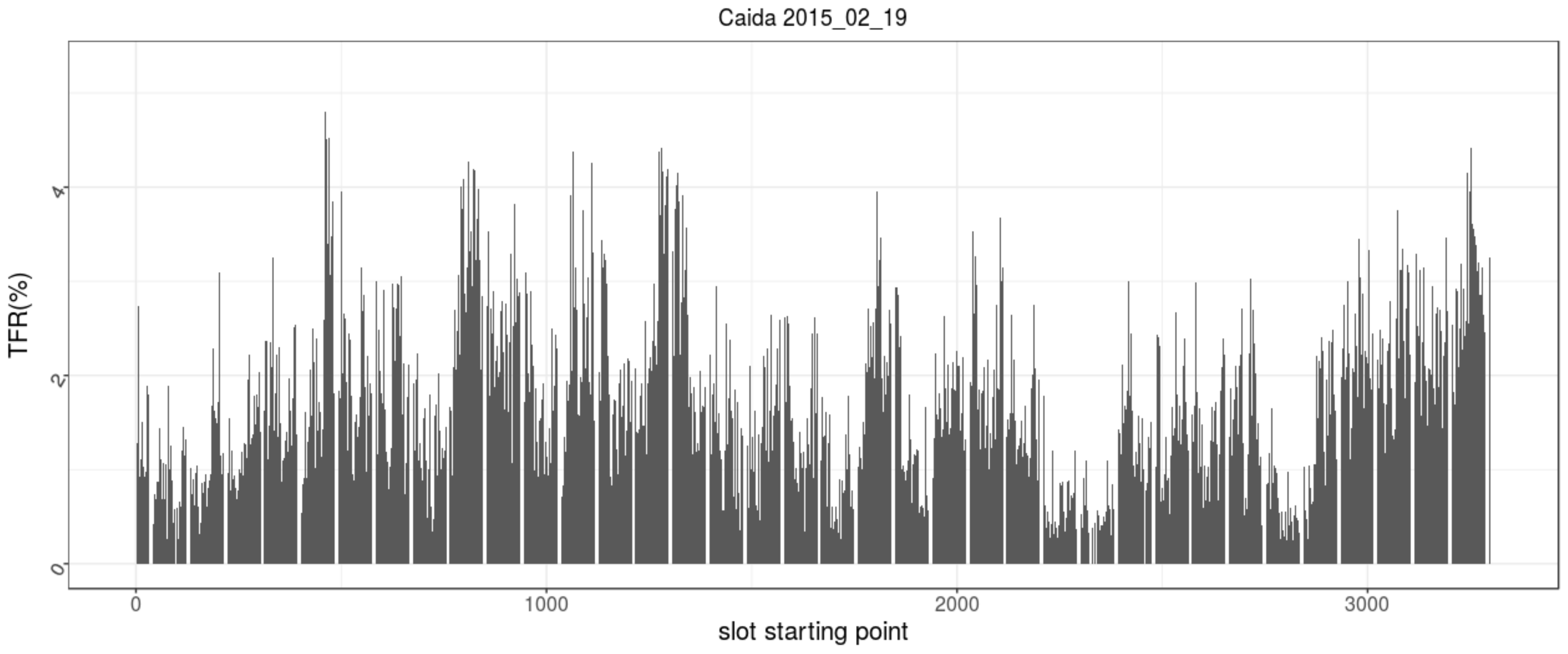}
\caption{Sliding time window TFR}
\label{fig_expSW_TFR}
\end{figure*}

Under most sliding time window, SRLG has a low FPR, smaller than 1.5\%. When FNR is small, FPR is relative high. But the total false rate is stably small. This experiments show that SRLG has low TFR and smallest memory for sliding super point detection in core network. It can be applied to a bigger network by increasing of the size of $SREA$ and $SLEA$.

\section{Conclusion}
 Sliding super point cardinality estimation is an important issue in network research areas. This paper firstly proposed an algorithm SRLG to solve this problem. SRLG has the ability to run parallel in distributing environment. It uses two novel cardinality estimation methods: SRE and SLE. Based on SRE and SLE, two smart structures $SREA$ and $SLEA$ are devised. SREA detects sliding super points and generates a candidate list by the novel reversible hash functions $RHFG$. SLEA estimates the cardinality of every host in the candidate list with high accuracy. By sharing short integers of different SLEs, SLEA consumes very small memory, even smaller than those algorithms running under discrete time window. Small memory consumption reduces the communication cost between different nodes which is always the bottle neck of many distributing algorithm. Both SREA and SLEA could be updated parallel. When deployed on GPU, SRLG can deal with high speed network in real time.

\iftoggle{ACM}{
\bibliographystyle{unsrt}
}

\bibliography{..//..//ref} 

\end{document}